\newcommand\fs@boxedtop
   \def\@fs@mid{\vspace\abovecaptionskip\relax}%
   \let\@fs@iftopcapt\iftrue
\DeclareDocumentCommand\Pr{ m g }{%
    \ensuremath{   \IfNoValueTF {#2}
      {\mathbb{P}\left[{#1}\right]}
      {\mathbb{P}\left[{#1}\middle\vert{#2}\right]}%
    }
}
\DeclareDocumentCommand\E{ m g }{%
    \ensuremath{   \IfNoValueTF {#2}
      {\mathbb{E}\left[{#1}\right]}
      {\mathbb{E}\left[{#1}\middle\vert{#2}\right]}%
    }
}
\newif\ifdraft
\begin{document}
\raggedbottom
\title{Structural Advantages for Integrated Builders in MEV-Boost}
\thanks{We thank Davide Crapis, Julian Ma, Barnab\`{e} Monnot, Alex Nezlobin and Thomas Thiery for helpful comments and suggestions. }
\author{Mallesh M. Pai$^\text{A}$} 
\address{$^\text{A}$\href{economics.rice.edu}{Rice University} and \href{www.mechanism.org}{Special Mechanisms Group}}
\author{Max Resnick$^\text{B}$}
\address{$^\text{B}$\href{www.mechanism.org}{Special Mechanisms Group}}

\email{mallesh.pai@mechanism.org, max.resnick@mechanism.org}

\begin{abstract}
Currently, over 90\% of Ethereum blocks are built using MEV-Boost, an auction that allows validators to sell their block-building power to builders who compete in an open English auction in each slot. Shortly after the merge, when MEV-Boost was in its infancy, most block builders were neutral, meaning they did not trade themselves but rather aggregated transactions from other traders. Over time, integrated builders, operated by trading firms, began to overtake many of the neutral builders. Outside of the integrated builder teams, little is known about which advantages integration confers beyond latency and how latency advantages distort on-chain trading. 

This paper explores these poorly understood advantages. We make two contributions. First, we point out that integrated builders are able to bid truthfully in their own bundle merge and then decide how much profit to take later in the final stages of the PBS auction when more information is available, making the auction for them look closer to a second-price auction while independent searchers are stuck in a first-price auction. Second, we find that latency disadvantages convey a winner's curse on slow bidders when underlying values depend on a stochastic price process that change as bids are submitted. 

\keywords{First-price auction \and Second-price auction \and Latency Advantage \and Common Value Auction}
\end{abstract}
\maketitle

\newpage

\section{Introduction}\label{sec:intro}
Nominally, new Ethereum blocks are produced by ordinary validators who, having been temporarily anointed as the \textit{proposer} by the protocol, gather transactions from the mempool (the set of publicly available transactions) and pack them together into the most valuable block they can produce.

Choosing a random validator to be the proposer is similar to choosing a random New Yorker to order all trades on the New York Stock Exchange for the next 12 seconds.\footnote{Ethereum builds a new block every 12 seconds so any trades that come in in the 12 second window between blocks are ordered subject to the discretion of the proposer.}  The typical New Yorker, lacking the institutional knowledge and infrastructure required, would not be able to extract much from his brief chronological hegemony over the exchange. On the other hand a high-frequency trading firm does have the expertise and therefore could extract much more from 12 seconds of chronological hegemony. This suggests a strategy for the chosen subject: instead of trying to extract the value himself, he should auction off the ordering rights to the highest bidder. Presumably there are multiple competing trading firms who would be able to extract significant value from this opportunity and so auction revenues will be high, and most of the value from the opportunity will be captured in the form of auction revenue and passed on to the selected pedestrian.%
\footnote{Indeed, one of the original motivations for PBS, which we introduce in the subsequent paragraph, is that a randomly chosen validator may have heterogeneous ability to extract value from their proposal rights, leading to eventual centralization in favor of more sophisticated validators. The existence of the auction ``levels the playing field'' among validators, which was viewed as desirable. Additionally, given the scarcity of blockspace, this was also seen as desirable from an efficiency perspective. See \cite{buterin} for more details.}

This strategy has been widely adopted by the Ethereum validator set. Today, over 90 percent of proposers elect to outsource the building of the block by auctioning it off to the highest bidder through MEV-Boost \citep{wahrstatter2023time}. The bidders in this auction are called \textit{builders}. As the name suggests, builders specialize in building valuable blocks, by any means necessary. Builders regularly expand the scope of their block-building beyond ordering transactions in the public mempool: they secure private order-flow, accept private bundles from MEV searchers, and even integrate with their own trading shops. 

The original conception of the PBS-auction did not involve these activities by searchers as far as we can tell. The idealistic version of the PBS auction was that each builder would be neutral (as the first few builders were and many still are). But 
\cite{gupta2023centralizing} revealed that several builders integrate with their own trading shops in order to more effectively capture CEX/DEX arbitrage. These `HFT' builders or integrated builders win far more blocks when the CEX price was volatile in the time since the preceding slot (when there was more CEX/DEX arbitrage available) relative to when the price was stable, confirming that these builders have a significant advantage in extracting CEX/DEX arbitrage. 

But why does integration confer such a sizable advantage? One obvious answer is that integration cuts down the communication time between the searcher and the builder, allowing for lower-latency strategies. Certainly, this is part of the reason that integration is important; however, it doesn't tell the whole story.  In this paper we propose two simple models to show how the latency advantage plays out in auction settings.

In the first, we consider a private values environment, i.e., one where the bidders' value for the good is not influenced by others' signals. This is the case, for example, if bidders' values come purely from the associated tips of the transactions in their block: the fact that other builders have a different value just reflects the fact that they constructed a different block, and does not change their value for winning the PBS auction with their own block. 
The PBS auction proceeds in two phases: in the first phase, \textit{block construction}, searchers submit bundles to builders, and builders construct the blocks that maximize their profits. In the second stage builders compete with each other to see which of their blocks wins. Independent searchers must take profits in the block construction phase, which means they must shade their bids (relative to the value of the bundle/ MEV opportunity) in order to retain profit for themselves. Since they do not know the value of the opportunities others have found, they do not know precisely how much to shade to get included in the next block. They instead choose a shading amount that maximizes their expected profit, i.e. their profit upon inclusion times the probability of inclusion. The integrated-searchers on the other hand know how much they expect to make if their bundle lands in the first stage. They can then decide how much profit to take during the second stage of the auction, which is an English auction. Since the first stage occurs earlier, and is typically sealed bid, independent searchers are acting in a pay-your-bid regime, whereas integrated searchers are bidding in something akin to a second-price regime.

The second model considers a common-value setting. This for example is the case for a top-of-block CEX/DEX arb (i.e., where the competition is for the right to trading with an AMM pool on-chain, using information from a continuous time CEX).%
\footnote{Such arbs are a major source of AMM liquidity provider (LP) losses, see e.g. \cite{milionis2022automated,milionis2023automated} for theoretical foundations and \cite{milionis2022quantifying} for some estimates.} In this case latency advantages manifest as an information advantage, i.e. the ability to bid later in the auction. 

The basic upshot of our results is that in either case, integrated/ low-latency bidders are advantaged. Our results therefore provide additional reasons for the rise of integrated builders (originally documented in \cite{gupta2023centralizing}).

\section{Private Values}

There are $n = n_A + n_B$ builders who are builders in the PBS auction. Of these $n_A$ are integrated builders and $n_B$ are independent/ non-integrated builders. In this section we assume independent private values: every integrated builder has a value that is drawn from a distribution with CDF $F_A$ and density $f_A$, and every non-integrated builder has a value drawn from a distribution with CDF $F_B$ and density $f_B$. All draws are independent.

We consider the following hybrid auction:
\begin{enumerate}
\item There is a single item for sale.
\item All bidders simultaneously submit bids.
\item The highest bidder wins the item. Payments are defined as follows:
\begin{enumerate}
    \item If the winner is an integrated builder, then they pay the next-highest bid.
    \item If the winner is instead an independent builder, they pay their own bid. 
\end{enumerate}
\end{enumerate}

In short, the integrated builders compete in, what is in their view, a second-price auction while the non-integrated builders compete in a first-price auction. 

Before we proceed let us make explicit how the model above corresponds to the setting of interest. Suppose there are $n$ searchers. Each of these searchers finds single profitable opportunity (for example, this could be a sandwich on an existing transaction in the mempool). We assume that each searcher's opportunity is unique.\footnote{In practice multiple searchers may find the same or related opportunities, this could be modeled as correlated instead of independent values---we believe this would complicate the model without adding any extra insight.} Each searcher then needs to submit their opportunity to a builder to be included. Integrated searcher-builders simply submit their opportunity to their own integrated builder. Independent searchers submit their opportunity to neutral builders. We assume that each independent searcher sends to a different netural builder, in line with the simplifying assumption made in \cite{feldman2010auctions} in the context of Ad Exchanges.\footnote{Relaxing this assumption would lead to correlated values among the neutral builders, which in our opinion would complicate the model without adding any extra insight.} At the stage of submitting to these neutral builders, the independent searchers must choose how much they should pass along as a fee to the builder, for instance in the form of a bundle tip. A higher tip increases the chances of inclusion on the chain, but reduces the net profit  of the searcher.  Subsequently, the builders bid in the PBS auction. We assume that there are no other transactions so the block that the winning bidder would propose is solely the opportunity from their corresponding searcher. The value of a builder for winning a block is therefore the value of the underlying opportunity for an integrated searcher-builder, or the associated tip for a neutral builder. Since the PBS auction is an English auction, the outcome is the same as a standard sealed-bid second-price auction. From the perspective of an independent searcher, however, they have to submit their opportunity before knowing the bidding, and therefore their builder tip reflects a bid into a pay-your-bid auction. 

So now to solve this auction. 
\begin{observation}\label{obs:integrated}
    It is a weakly dominant strategy for integrated builders to bid their true values in this hybrid auction, so we restrict attention to these strategies for the integrated builder.
\end{observation}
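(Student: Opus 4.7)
The plan is to reduce the integrated builder's decision problem to a standard Vickrey (second-price) analysis and then apply the classical weak-dominance argument pointwise over the realized bids of the other players.

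First I would fix an integrated builder with value $v$ and condition on an arbitrary profile of bids by all other players; let $p$ denote the maximum such bid. The key observation is that, by the payment rule, the integrated builder's realized utility depends on their own bid $b$ only through (i) whether $b$ exceeds $p$, in which case they win and pay $p$, and (ii) whether $b$ is below $p$, in which case they lose and pay nothing. Crucially, $p$ does not depend on $b$, and the payment upon winning does not depend on $b$. Thus from the integrated builder's perspective, holding the other bids fixed, the mechanism is exactly a second-price sealed-bid auction against a single competing bid $p$.

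Given that reduction, I would then carry out the textbook case analysis comparing the truthful bid $b = v$ against an arbitrary deviation. If the deviation is an overbid $b > v$, the only realizations of $p$ for which the outcome differs are $v < p < b$; on that event the deviation wins at price $p > v$ (utility $v - p < 0$) while truth loses (utility $0$), so truth is weakly better. Symmetrically, if the deviation is an underbid $b < v$, outcomes differ only on $b < p < v$; there truth wins with utility $v - p > 0$ while the deviation gives $0$. In all other regions of $p$ the two bids produce identical allocation and payment. Integrating over the (arbitrary) joint distribution of the other bidders' actions preserves the pointwise inequality, yielding weak dominance.

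There is essentially no hard step: the proof is a direct application of the standard Vickrey argument, and the only point that requires a sentence of care is noting that the argument works against any strategies of the remaining integrated and non-integrated builders, because we condition on the realized bid profile before taking expectations. Having established weak dominance, the restriction to truthful strategies for integrated builders in the sequel is justified, and the remainder of the analysis can treat their bids as equal to their values.
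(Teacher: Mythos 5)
Your proof is correct and is precisely the standard Vickrey dominance argument that the paper implicitly relies on (the paper states this as an Observation with no written proof, since an integrated builder who wins always pays the highest competing bid regardless of its source). Nothing further is needed.
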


\begin{lemma}
    From the point of view of the non-integrated builders, the auction is equivalent to a first-price auction  of $n_B$ agents where there is a random, secret reserve price distributed according to the CDF $F_A^{n_A}$.
\end{lemma}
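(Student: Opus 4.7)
The plan is to leverage Observation~\ref{obs:integrated} to pin down the behavior of the $n_A$ integrated builders, and then re-describe the mechanism as seen by a non-integrated builder. Since each integrated builder $i$ bids their value $\valgood_i \sim F_A$, the relevant statistic from the point of view of a non-integrated builder is the maximum integrated bid
\[
R \;=\; \max_{i \in A} \valgood_i,
\]
which is the maximum of $n_A$ i.i.d.\ draws from $F_A$ and therefore has CDF $F_A^{n_A}$. Moreover, $R$ is independent of the non-integrated builders' values and bids.

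Next, I would fix an arbitrary profile of bids $(b_j)_{j \in B}$ among the non-integrated builders and describe the payoff of an arbitrary non-integrated builder $j$ in the original hybrid mechanism. Builder $j$ wins if and only if $b_j > \max_{k \in B,\, k \neq j} b_k$ and $b_j > R$; when $j$ wins they pay $b_j$; when $j$ loses they pay zero (regardless of whether the winner is an integrated builder or another non-integrated builder, and regardless of the second-highest bid that an integrated winner would pay). This allocation-and-payment rule, for non-integrated builder $j$, is exactly the rule of a standard first-price auction among the $n_B$ non-integrated builders in which a secret random reserve $R \sim F_A^{n_A}$ is drawn independently: the bidder wins iff they are the highest among the $n_B$ bids \emph{and} exceed the reserve, and pay their own bid when they win.

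Since the mapping from each non-integrated builder's bid to their interim expected utility is identical in the two mechanisms (both the winning probability and the payment conditional on winning agree pointwise in $b_j$), the strategic problem facing the non-integrated builders is the same in both games, which is the sense in which the two are equivalent.

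The only mild subtlety to guard against is the treatment of ties and of the payment when an integrated builder wins in the original mechanism: ties have probability zero under $F_A, F_B$ admitting densities, and the second-price payment made by a winning integrated builder does not enter any non-integrated builder's payoff, so neither issue obstructs the equivalence.
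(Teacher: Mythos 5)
Your proposal is correct and follows essentially the same route as the paper's proof: invoke Observation~\ref{obs:integrated} to fix the integrated builders at truthful bidding, note that the maximum of their $n_A$ i.i.d.\ bids has CDF $F_A^{n_A}$, and observe that the non-integrated builders face a pay-your-bid rule in which winning additionally requires clearing this maximum, i.e.\ a secret random reserve. Your version is somewhat more explicit about the pointwise agreement of allocation and payment rules and about ties, but the underlying argument is identical.
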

\begin{proof}
    By Observation \ref{obs:integrated}, integrated bidders bid their values in this hybrid auction. Therefore the highest of these is distributed according to the CDF $F_A^{n_A}$. Since the auction is pay-as-bid for the non-integrated bidders, from their perspective this is a first-price auction with $n_B$ total bidders. However to win the object, the winning bidder also needs to be higher than any of the integrated builders. The latter therefore act as a secret reserve price distributed according to $F_A^{n_A}$.
\end{proof}

We can now characterize the symmetric Bayes-Nash equilibrium bidding strategy of non-integrated buyers. Let $\sigma(\cdot)$ denote bidding strategy of the non-integrated buyers, i.e., a bidder of value $v$ bids $\sigma(v)$. It is immediate that $\sigma(\cdot)$ is a strictly increasing function. Then we must have:
\begin{lemma}\label{lem:ni-eqbm}
The symmetric Bayes-Nash equilibrium among non-integrated bidders $\sigma(\cdot)$ solves:
\begin{align} \label{eqn:ni-eqbm}
    \sigma(v) = v - \frac{\int_{0}^v  F_B^{n_B-1}(t) F_A^{n_A}(\sigma(t)) dt}{F_B^{n_B-1}(v) F_A^{n_A}(\sigma(v))}.
\end{align}
\end{lemma}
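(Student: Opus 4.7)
The plan is to adapt the standard envelope-theorem derivation of the symmetric equilibrium bid function in a first-price auction to this asymmetric setting. By the previous lemma, a non-integrated bidder with value $v$ faces a first-price auction against $n_B-1$ other non-integrated bidders (using the symmetric strategy $\sigma(\cdot)$) together with a secret reserve distributed as $F_A^{n_A}$. First I would conjecture that the equilibrium strategy $\sigma$ is symmetric and strictly increasing, and denote by $W(v) := F_B^{n_B-1}(v)\, F_A^{n_A}(\sigma(v))$ the equilibrium winning probability of a non-integrated bidder of value $v$.

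Next I would apply the envelope theorem. A type-$v$ bidder who deviates to the bid $\sigma(t)$ meant for type $t$ earns expected payoff $(v-\sigma(t))\, W(t)$; the first-order condition at $t = v$ yields $(v-\sigma(v))\, W'(v) = \sigma'(v)\, W(v)$, which is equivalent to the statement that the interim utility $U(v) := (v-\sigma(v))\, W(v)$ satisfies $U'(v) = W(v)$. Integrating from $0$ to $v$ and invoking the boundary condition $U(0) = 0$ (which holds since $W(0) = F_B^{n_B-1}(0)\, F_A^{n_A}(\sigma(0)) = 0$ under atomless supports containing $0$) gives $(v-\sigma(v))\, W(v) = \int_0^v W(t)\, dt$. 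Dividing through by $W(v)$ and rearranging for $\sigma(v)$ produces exactly \eqref{eqn:ni-eqbm}.

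The main obstacle is not the algebra but the verification that the first-order condition is in fact globally sufficient, i.e., that a type-$v$ bidder cannot profitably deviate by imitating the bid of a very different type. This is the standard single-crossing/monotone-comparative-statics concern for symmetric first-price auctions; since both $\sigma$ and $W$ are increasing, a routine computation shows the deviation payoff $(v-\sigma(t))\, W(t)$ is quasi-concave in $t$ for each $v$, so that local optimality implies global optimality. A separate, more delicate issue---the existence and uniqueness of a $\sigma$ solving \eqref{eqn:ni-eqbm}---is a well-known subtlety for asymmetric first-price auctions but is tangential to the characterization the lemma asserts.
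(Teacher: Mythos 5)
Your proof is correct and takes essentially the same route as the paper: both identify the winning probability $x(v)=F_B^{n_B-1}(v)\,F_A^{n_A}(\sigma(v))$ and equate the first-price payoff $(v-\sigma(v))\,x(v)$ with the integral $\int_0^v x(t)\,dt$, the paper obtaining that identity by citing revenue equivalence while you derive it directly from the first-order/envelope condition. Your additional remarks on global sufficiency of the first-order condition and on existence/uniqueness go beyond what the paper records but do not change the argument.
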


\begin{figure}
    \centering
    \caption{Strategies for Beta distributed private values with $N_A = N_B = 3$}
    \includegraphics[width = \textwidth]{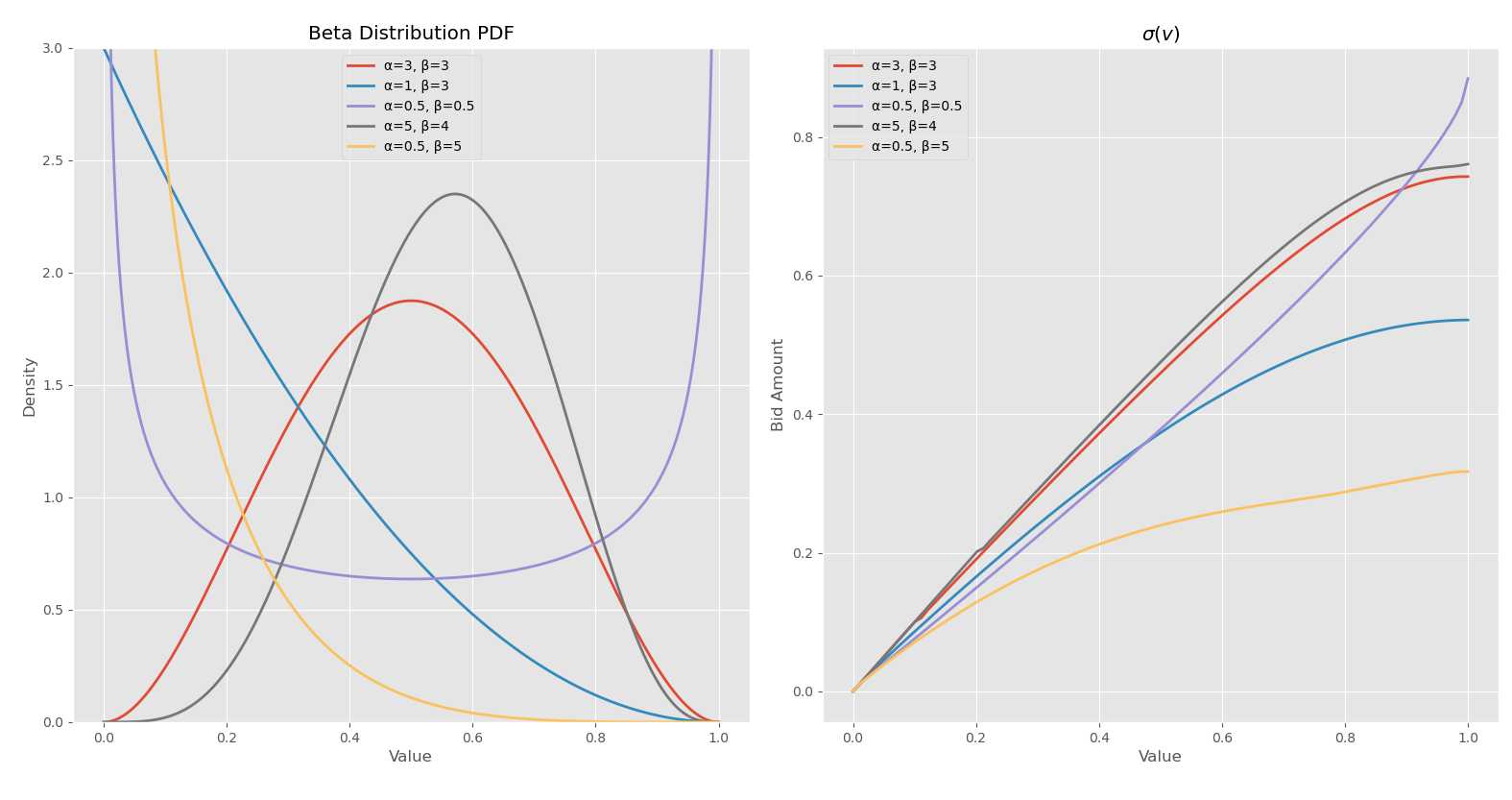}
    \label{fig:beta_dist}
\end{figure}
\begin{proof}
To see this, observe that if all non-integrated bidders bid according to $\sigma(\cdot)$, then a bidder of value $v$ wins with probability:
\begin{align*}
    x(v) = F_B^{n_B-1}(v) F_A^{n_A} (\sigma(v)). 
\end{align*}
Here the first term is the probability of beating all other non-integrated bids (since $\sigma$ is strictly increasing) and the latter is the probability of the bid beating the secret-reserve price/ all $n_A$ integrated bidders' bids. 

Then by revenue equivalence (see e.g. \cite{krishna2009auction}), this bidder's interim expected surplus in this auction must equal 
\begin{align*}
    S(v) = \int_0^v x(t) dt. 
\end{align*}
However, by bidding $\sigma(v)$ in a first price auction where you win with probability $x(v)$, the interim expected surplus must equal $(v - \sigma(v)) x(v)$. 

Equating the two we have \eqref{eqn:ni-eqbm} as desired. \end{proof}

\subsection{Analytical solutions}
To see analytical solutions for this let's firstly assume that $n_B=1$ and that $F_A$ is the uniform distribution on [0,1], i.e., $F_A(t) = t \iff t \in [0,1]$. Figure \ref{fig:beta_dist} displays numerical results for a range of Beta-distributions. 

In this case \eqref{eqn:ni-eqbm} reduces to:
\begin{align*}
    &\sigma(v) = v - \frac{\int_0^v \sigma^{n_A}(t) dt}{ \sigma^{n_A}(v)}\\
\implies & \sigma^{n_A+1}(v) - v \sigma^{n_A}(v) + \int_0^v\sigma^{n_A}(t) dt =0.\\
\intertext{Differentiating, we have:}
& (n_A+1) \sigma^{n_A}(v) \sigma'(v) - n_A v \sigma^{n_A-1}(v) \sigma'(v)=0,\\
\implies& \sigma'(v) \sigma^{n_A-1}(v) \left( (n_A+1) \sigma(v) - n_A v \right) =0\\
\implies& \sigma(v) = \frac{n_A}{n_A+1} v
\end{align*}

So in this case, we have that:
\begin{proposition}
If there is a single non-integrated builder, it bids $\frac{n_A}{n_A+1}$ of its value in the auction. As a result its equilibrium surplus in the auction is:
\begin{align*}
    &S(v) = \left(\frac{n_A}{n_A+1}\right)^{n_A} \frac{v^{n_A+1}}{n_A+1},
\intertext{whereas if this bidder was also integrated its surplus in the auction would be}
& S(v) = \frac{v^{n_A+1}}{n_A+1}.
\end{align*}
In other words being the sole non-integrated builder costs it a fraction  $\left(\frac{n_A}{n_A+1}\right)^{n_A}$ of its surplus relative to if it had been an integrated builder. 
\end{proposition}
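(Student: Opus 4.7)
The plan is to observe that the preceding display has essentially solved the problem: the ODE manipulation yields the candidate strategy $\sigma(v) = \frac{n_A}{n_A+1}v$, and what remains is (i) verifying this is indeed an equilibrium best response and (ii) computing the two surpluses directly.

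First I would confirm equilibrium. With $n_B = 1$ the non-integrated bidder's optimization is effectively unilateral: by Observation \ref{obs:integrated} the $n_A$ integrated bidders submit their values, which are iid uniform on $[0,1]$. A non-integrated bidder with value $v$ bidding $b$ wins with probability $b^{n_A}$ and pays $b$, so its interim expected surplus is $(v-b) b^{n_A}$. First-order conditions give $b = \frac{n_A}{n_A+1}v$, and concavity of $(v-b)b^{n_A}$ on $[0,v]$ (for $n_A \geq 1$) confirms this is the unique maximizer. Plugging back in yields
$$S(v) = (v-\sigma(v)) \sigma(v)^{n_A} = \frac{v}{n_A+1} \left(\frac{n_A v}{n_A+1}\right)^{n_A} = \left(\frac{n_A}{n_A+1}\right)^{n_A} \frac{v^{n_A+1}}{n_A+1}.$$

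Second, I would compute the counterfactual in which this bidder were also integrated. Then all $n_A + 1$ bidders are integrated and, again by Observation \ref{obs:integrated}, the auction collapses to a standard second-price auction among iid Uniform $[0,1]$ bidders, in which truthful bidding is dominant. The interim winning probability for a bidder of value $v$ is $v^{n_A}$, so by revenue equivalence (integrating the winning probability from $0$ to $v$) the expected surplus is $\frac{v^{n_A+1}}{n_A+1}$. Taking the ratio of the two surpluses yields the advertised loss fraction $\left(\frac{n_A}{n_A+1}\right)^{n_A}$.

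The main obstacle is essentially none: this proposition is a direct computation following from the ODE derivation that immediately precedes it, together with Observation \ref{obs:integrated}. The only conceptual point worth articulating is that in the counterfactual one must fix the correct benchmark, namely the symmetric second-price auction with $n_A + 1$ integrated bidders, so that the comparison holds the total number of participants constant. As a sanity check, the loss fraction $\left(\frac{n_A}{n_A+1}\right)^{n_A}$ tends to $1/e$ as $n_A \to \infty$, recovering the intuition that a lone pay-as-bid bidder facing many second-price competitors retains only a constant fraction of the surplus she would have enjoyed on equal footing.
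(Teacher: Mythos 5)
Your proof is correct and matches the paper's in substance: the paper obtains $\sigma(v)=\frac{n_A}{n_A+1}v$ from the revenue-equivalence integral equation specialized to $n_B=1$ and uniform $F_A$, and the surplus formulas follow by the same direct computation you give. Your route via first-order conditions of the unilateral problem $(v-b)b^{n_A}$ is an equivalent (and arguably cleaner) way to get the same linear strategy; the only nitpick is that $(v-b)b^{n_A}$ is unimodal rather than concave on $[0,v]$ for $n_A\geq 2$, which still suffices for uniqueness of the maximizer.
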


Now let's suppose that $n_B= n$ for $n>1$. We'll denote the equilibrium in this case by $\sigma_n$. Assume also that $F_B$ is also the uniform distribution on $[0,1]$

In this case, substituting into \eqref{eqn:ni-eqbm}, we have that $\sigma_n$ is the solution to
\begin{align*}
    &\sigma_n(v) = v - \frac{\int_0^v t^{n-1} \sigma_n^{n_A}(t) dt}{v^{n-1} \sigma_n^{n_A}(v)},\\
   \implies &  v^{n-1} \sigma_n^{n_A+1}(v) - v^n \sigma_n^{n_A} (v) + \int_0^v t^{n-1} \sigma_n^{n_A}(t) dt =0,\\
\intertext{Differentiating wrt v, we have that:}
&(n-1) v^{n-2} \sigma_n^{n_A+1}(v) + (n_A+1)  v^{n-1} \sigma_n^{n_A}(v) \sigma_n'(v) -  (n-1) v^{n-1} \sigma_n^{n_A} (v) - n_A v^n \sigma_n^{n_A-1}(v) \sigma'_n(v) =0.
\intertext{Dividing throughout by $v^{n-2}$ gives us}
\implies& (n-1) \sigma_n^{n_A+1}(v) + (n_A+1)  v \sigma_n^{n_A}(v) \sigma_n'(v) -  (n-1) v \sigma_n^{n_A} (v) - n_A v^2 \sigma_n^{n_A-1}(v) \sigma'_n(v) =0.
\intertext{Dividing throughout by $\sigma_n^{n_A-1}$ gives us}
\implies &  (n-1) \sigma_n^{2}(v) + (n_A+1)  v \sigma_n(v) \sigma_n'(v) -  (n-1) v \sigma_n (v) - n_A v^2  \sigma'_n(v) =0.
\intertext{Collecting terms we have that}
& ((n_A+1)  v \sigma_n(v) - n_A v^2) \sigma_n'(v) = (n-1)  \sigma_n(v) (v  - \sigma_n(v))\\
 \implies& \sigma_n'(v) = \frac{(n-1) \sigma_n (v) (v - \sigma_n(v) )}{v ((n_A+1)   \sigma_n(v) - n_A v)} 
\end{align*}
Since $\sigma_n' >0$, so that the numerator and denominator are both positive.  We can conclude that:
\begin{proposition}
The equilibrium bids of the $n$ non-integrated bidders as a function of their value, $\sigma_n(v)$, satisfies:
    \begin{align*}
    \frac{n_A}{n_A+1} v  \leq \sigma_n (v) \leq v.
\end{align*}
\end{proposition}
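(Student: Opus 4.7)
The plan is to extract the bounds by a sign analysis of the ODE derived just before the proposition,
\[\sigma_n'(v) = \frac{(n-1)\,\sigma_n(v)\,(v - \sigma_n(v))}{v\bigl((n_A+1)\sigma_n(v) - n_A v\bigr)},\]
combined with strict monotonicity of the equilibrium bid function, a property already used in the derivation of Lemma~\ref{lem:ni-eqbm}. Monotonicity gives $\sigma_n'(v) > 0$ for every $v > 0$, and the ODE then forces the numerator and denominator on the right-hand side to share a sign at each such $v$.

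Next I would factor out the obviously positive scalars: $(n-1)$ from the numerator (using the standing assumption $n > 1$ in this subsection), $\sigma_n(v)$ from the numerator (strictly positive for $v > 0$, since a strictly increasing equilibrium with $\sigma_n(0)=0$ must satisfy $\sigma_n(v) > 0$ for $v > 0$), and $v$ from the denominator. What remains is that $v - \sigma_n(v)$ and $(n_A+1)\sigma_n(v) - n_A v$ must share a sign at every $v > 0$. A two-case analysis then finishes the argument: if both quantities are positive I immediately obtain $\tfrac{n_A}{n_A+1}\, v < \sigma_n(v) < v$, which is exactly the content of the proposition; if both were instead negative I would need $\sigma_n(v) > v$ together with $\sigma_n(v) < \tfrac{n_A}{n_A+1}\, v$ at the same $v$, which is impossible since $\tfrac{n_A}{n_A+1} < 1$. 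The boundary case $v=0$ (where both bounds collapse to $0$) then recovers the weak inequalities as stated.

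The main obstacle I anticipate is purely auxiliary bookkeeping, namely verifying that $\sigma_n(v) > 0$ for $v > 0$, which is what allows the clean factoring of the numerator. This should follow from strict monotonicity of $\sigma_n$ together with the boundary condition $\sigma_n(0) = 0$ imposed by individual rationality for a value-zero bidder. Beyond this mild bookkeeping, the argument reduces to the sign analysis of the ODE, which the text's derivation already essentially summarizes in its remark that ``the numerator and denominator are both positive.''
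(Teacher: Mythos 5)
Your proposal is correct and follows essentially the same route as the paper: both arguments read the bounds off the sign of the ODE for $\sigma_n'$ using strict monotonicity of the equilibrium bid. You are in fact slightly more careful than the paper, which simply asserts that the numerator and denominator are ``both positive,'' whereas you explicitly rule out the both-negative case via $\tfrac{n_A}{n_A+1} < 1$ and justify $\sigma_n(v) > 0$ from $\sigma_n(0)=0$; these are worthwhile but minor refinements of the same argument.
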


It follows straightforwardly, by an analogous appeal to revenue equivalence, that each of these non-integrated bidders still has a lower expected surplus than if they were bidding truthfully.

\section{Common Values}

We now consider a setting where the object for sale has a common value, i.e. the value of the object is the same regardless who eventually wins. This may be the case if the object for sale is the top slot of a block, designated as e.g. a CEX-DEX arb or a DeFi liquidation. The realized value would then be the same regardless of who wins it. Such a common value model was proposed in \cite{ma2023dynamic}.

We consider the case of two kinds of bidders as before, fast bidders and slow bidder. The slow bidders bid at time 0, at which time the present value of the object is $v_0$. The process follows a martingale, i.e., we assume that $t$ seconds after this the process is worth $v_t = (\exp m_t) v_0$ where $m_t \sim N(\mu t, \frac{\sigma^2}{2} t) $ and $\mu = -\frac{\sigma^2}{2}.$ In particular the latter implies that $\mathbb{E}[v_{t+\delta}|v_t]=v_{t}$ for any $t, \delta \geq 0$. 

\subsection{An Unraveling Result}
We first study the case where the fast bidder bids exactly $\Delta$ seconds after the slow bidders, i.e. they have superior information. The proposition below shows the complete unraveling of the auction. 

\begin{proposition} \label{prop:imp}
    In the auction described above, the slow bidders always bid 0 and the fast bidder always wins. 
\end{proposition}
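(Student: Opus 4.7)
The plan is to use a classical unraveling / winner's-curse argument. First I would characterize the fast bidder's best response: since they observe $v_\Delta$ exactly and bid after the slow bidders have committed, they will outbid the highest standing slow bid whenever $v_\Delta$ strictly exceeds it. Consequently, an individual slow bidder who posts $b > 0$ can win only on the event $\{v_\Delta < b\}$, and ties occur with probability zero because $v_\Delta = \exp(m_\Delta)\,v_0$ has a continuous log-normal distribution on $(0,\infty)$.

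Next I would invoke the martingale hypothesis. Since $\mathbb{E}[v_T \mid v_\Delta] = v_\Delta$ for any terminal time $T \geq \Delta$, and the winning event $\{v_\Delta < b\}$ is measurable with respect to $v_\Delta$, the tower property yields
\[
\mathbb{E}[v_T \mid v_\Delta < b] \;=\; \mathbb{E}[v_\Delta \mid v_\Delta < b] \;<\; b,
\]
where the strict inequality uses the full support of $v_\Delta$ on $(0,\infty)$. Under a pay-as-bid rule (which is natural here since the slow bidders commit in sealed form at time $0$), the slow winner pays $b$, so their expected profit conditional on winning is strictly negative. Compared with the payoff of zero from bidding $0$, any positive bid is strictly dominated. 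Hence in equilibrium every slow bidder bids $0$, and the fast bidder wins almost surely by submitting any infinitesimal positive amount, because $v_\Delta > 0$ with probability one.

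The main obstacle is cleanly pinning down the fast bidder's optimal response and the slow bidder's payment rule so that one can assert ``the slow bidder wins only on $\{v_\Delta < b\}$ and pays at least $b$ when she does.'' Once these two facts are in place, the martingale step is a one-line application of the tower property, and the unraveling conclusion is immediate. A mild subtlety worth flagging is that the same argument only yields weak (not strict) dominance of bidding $0$ in a pure second-price format, since the slow bidder's payment in that case would be $v_\Delta$ and the martingale then forces expected profit to be exactly zero rather than strictly negative; the strict conclusion in the proposition therefore relies on the slow side paying at least its own bid, which matches how independent searchers commit their bundle tips in the PBS setting.
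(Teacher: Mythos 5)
Your argument is correct and follows essentially the same route as the paper's proof: the fast bidder outbids if and only if $v_\Delta$ exceeds the standing slow bid, so a slow bidder posting $b>0$ wins only on the adversely selected event $\{v_\Delta \le b\}$ where $\mathbb{E}[v_\Delta \mid v_\Delta \le b] < b$, forcing all slow bids to zero. Your additional remarks on tie-breaking, the tower property for later terminal values, and the reliance on a pay-your-bid payment for the strict (rather than weak) dominance are refinements the paper leaves implicit, but they do not change the substance of the argument.
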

\begin{proof}
    To see this suppose a slow bidder bids any $b >0$. 
    
    The fast bidder continues to see the evolution of the process and bids at time $\Delta$ after the slow bidder, at which point the object is worth $v_{\Delta}$. The fast bidder has a simple optimal strategy--- if $v_{\Delta}>b $ then outbidding the slow bidder has positive expected value. If $v_{\Delta}\leq b$ then the outbidding the slow bidder has negative expected value. 

    Therefore the fast bidder should outbid the slow bidder if and only if $v_{\Delta} >b$. 

    Therefore the slow bidder wins the object if and only if $v_{\Delta} \leq b$. But in this case, the expected value of the object conditional on winning is $\mathbb{E}[v_{\Delta}| v_\Delta \leq b] <b$. As a result bidding any $b>0$ has strictly negative expected value for the slow bidder. Therefore all slow bidders should bid 0, and be outbid with probability equal to 1 by a fast bidder.
\end{proof}

\subsection{The PBS Candlestick}
Validators are not required to call the block exactly at the start of the slot \citep{schwarzschilling2023time}, network delays are never a certainty, and clock synchronization can vary between validators. What this means is that, in practice, a speed advantage may not manifest itself deterministically. A slow bidder may simply have fewer opportunities to update their bid in the PBS auction (as a result of their speed, or lack thereof). We model this as a candlestick auction.%
\footnote{A candlestick auction is one where bids are accepted until a (real-world) candle runs out, or some other stochastic process terminates. This was intended to ensure that no one could know exactly when the auction would end and make a last-second bid. See \url{https://en.wikipedia.org/wiki/Candle_auction} for further details. }

To capture the speed advantage of the fast bidder, we consider the following simplified model:
\begin{enumerate}
    \item The slow bidders submit sealed bids at time $0$.  
    \item The fast bidder has an option to bid at time $\Delta$ with probability $p \in [0,1]$. 
\end{enumerate}
We assume there are at least two slow bidders.%
\footnote{If there is a single slow bidder, then the optimal strategy for them is trivially to bid $0$ and win the auction only when the fast bidder does not have a revision opportunity.}

This model is a simplified version of a ``revision game'' as originally proposed in \cite{kamada2020revision}. Slow bidders can submit a first bid, but then the fast bidder has a stochastic opportunity to revise their bid, while the slow bidder does not have such an opportunity. The parameter $p$ therefore captures the ``speed'' of the fast bidder, i.e., a higher $p$ corresponds to a bidder who can respond more often to the slow bidder's bid even if they get outbid initially. Intuitively $p=0$ corresponds to a standard sealed-bid common value auction (which in this case will conclude at price $v_0$), while $p=1$ corresponds to the unraveling case above. A more general model where slow bidders have stochastically fewer revision opportunities than fast bidders is left to future work. 

\begin{proposition}
    The winning slow bidder in the period 0 auction bids the largest $b_0^S \in [0,v_0]$ solving: 
    \begin{align}\label{eqn:b0s}
    (1-p) (v_0 - b_0^S) + p P(v_\Delta < b_0^S) (E(v_\Delta | v_\Delta < b_0^S) - b_0^S) =0.
\end{align}
They win the item with probability $p P(v_\Delta < b_0^S) + (1-p).$
\end{proposition}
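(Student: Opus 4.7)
The plan is to solve the game backward. First I would pin down the fast bidder's best response to any observed highest slow bid $b$: because the item has common value $v_\Delta$ at time $\Delta$ and the fast bidder sees the sealed slow bids, outbidding $b$ by an arbitrarily small amount delivers expected payoff $v_\Delta - b$, so the fast bidder enters and outbids if and only if $v_\Delta > b$ (exactly as in the unraveling argument of Proposition \ref{prop:imp}).

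Next I would compute the expected payoff $\pi(b)$ of a single slow bidder who bids $b$ and is the highest among slow bidders. Splitting on the fast bidder's realization: with probability $1-p$ no revision occurs and the bidder pockets $v_0 - b$; with probability $p$ a revision occurs and the slow bidder keeps the item only when $v_\Delta < b$, contributing $P(v_\Delta < b)(E(v_\Delta | v_\Delta < b) - b)$ in expectation. The sum is precisely the left-hand side of \eqref{eqn:b0s}, so the equation is exactly the zero-profit-at-the-top condition $\pi(b_0^S)=0$.

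The substantive step is to argue that the equilibrium bid is the \emph{largest} root of $\pi$ in $[0, v_0]$. Suppose all slow bidders bid $b_0^S$. A slow bidder can either stay, winning the slow-bidder contest only through the tiebreak (expected payoff proportional to $\pi(b_0^S)$), or bid marginally above $b_0^S$ and win that contest outright (expected payoff $\pi(b_0^S + \varepsilon)$). Ruling out profitable deviations requires $\pi(b_0^S) \ge 0$ and $\pi(b) \le 0$ just to the right of $b_0^S$, which isolates the largest zero of $\pi$ rather than any interior root where $\pi$ is still increasing. Existence follows from $\pi(0) = (1-p) v_0 \ge 0$, $\pi(v_0) = p P(v_\Delta < v_0)(E(v_\Delta | v_\Delta < v_0) - v_0) \le 0$ (the martingale property $E v_\Delta = v_0$ forces the truncated conditional mean strictly below $v_0$), plus continuity of $\pi$ and the intermediate value theorem.

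Finally, the winning probability is immediate from the fast bidder's characterization: the winning slow bidder keeps the item iff either the fast bidder does not revise (probability $1-p$) or the fast bidder revises with $v_\Delta < b_0^S$ (probability $p P(v_\Delta < b_0^S)$), summing to $(1-p) + p P(v_\Delta < b_0^S)$. I expect the main obstacle to be the equilibrium-selection step, since it hinges on the tie-breaking rule and on verifying that $\pi$ stays non-positive immediately to the right of the largest root; the rest is bookkeeping.
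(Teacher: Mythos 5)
Your proposal is correct and follows essentially the same route as the paper: backward induction on the fast bidder's cutoff rule, the two-case profit decomposition giving \eqref{eqn:b0s} as a zero-profit condition, existence via the intermediate value theorem, and selection of the largest root by an upward-deviation argument. The only substantive difference is in the selection step: your local condition that the profit function be non-positive \emph{just to the right} of $b_0^S$ does not by itself exclude an interior down-crossing root, whereas the paper rules these out globally by observing that for any smaller root there is some strictly higher bid at which the left-hand side of \eqref{eqn:b0s} is strictly positive, hence a profitable deviation.
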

\begin{proof}
    Let us denote the winning slow bidder's bid  bid $b_0^S$. 

    The fast-bidder's move in the subgame after the period $0$ auction \emph{if they get a revision opportunity} is straightforward: whenever the current value at time $\Delta$, $v_\Delta > b_0^S$ then the fast-bidder updates their bid to $b_0^S$. If instead $v_\Delta < b_0^S$, then they choose lose the auction (i.e. let the slow bidder win despite having a a revision opportunity).

    Therefore, at time $0$, the slow bidder understands that if they have the higher bid at time $0$, there are two possibilities:
    \begin{enumerate} 
    \item They lose money whenever the fast bidder gets an opportunity to update their bid. In particular, in expectation, their net expected profit in this case is:
    \begin{align*}
        P(v_\Delta < b_0^S) (E(v_\Delta | v_\Delta < b_0^S) - b_0^S) 
    \end{align*}
    \item If the fast bidder does not get an opportunity to update their bid, the slow bidder's expected profit in this case is:
    \begin{align*}
        (v_0 - b_0^S).
    \end{align*}
    \end{enumerate}
Note that by assumption the probability of the latter event is $1-p$ and the former is $p$. Consider the solution to:
\begin{align}
    (1-p) (v_0 - b_0^S) + p P(v_\Delta < b_0^S) (E(v_\Delta | v_\Delta < b_0^S) - b_0^S) =0.
\end{align}

Note that the left hand side is strictly positive at $b_0^S=0$ and strictly negative at $b_0^S = v_0$, which implies that there exists at least one solution by the intermediate value theorem. The largest solution to this is the largest amount a slow bidder can bid without losing money. 

We claim that any equilibrium must involve at least one slow bidder bidding the largest $b_0^S$ solving \eqref{eqn:b0s} in period $0$. To see why note that the since the slow bidders are symmetric they must have 0 profit in equilibrium. Now suppose the winning bid is something other than the largest root. Since the left hand side is positive at $b_0^S=0$ and negative at $b_0^S=v_0$, by continuity, if there are multiple roots and the winning slow bidder's bid is one of the smaller roots, then there exists a larger bid such that the left hand side of \eqref{eqn:b0s} evaluates to strictly positive, so this cannot be an equilibrium. 
\end{proof}

Therefore, stochastic opportunities for the fast bidder to outbid the slow bidder lessens the adverse selection faced by the latter, and therefore allow for positive bids and positive probability of winning for the fast bidder.

\section{Conclusion}
As has been documented in multiple venues, the builder market in Ethereum increasingly favors integrated builder-searchers. A priori, one might think that there is no reason for searchers to integrate with builders, and indeed that competitive searchers might be better served focusing on searching and simply letting builders compete for their orders. While it is understood that integration confers and advantage to integrated builders relative to non-integrated builders, little is understood about exactly how. This paper sheds a first light on how latency advantages play out in an auction setting, both in the case of private value and common values. We leave to future work quantifying the source of the advantage, i.e., additional time to construct blocks etc versus advantages in the auction.  

\bibliographystyle{econometrica}
\bibliography{onchain-auctions}

\end{document}